 \definecolor{darkblue}{RGB}{0,0,150}
  \newtheorem{theorem}{Theorem}[section]
  \newtheorem{proposition}[theorem]{Proposition}
\newcommand{\mO}{\ensuremath{\mathscr{O}}}
\newcommand{\ud}{\ensuremath{\mathrm{d}}}
\newcommand{\Lie}{\ensuremath{\mathcal{L}}}
\begin{document}

\preprint{APS/123-QED}

\title{Revisiting the Birkhoff theorem from a dual null point of view}

\author{Alan Maciel}
\email{alanmaciel@ime.unicamp.br}

\affiliation{Departamento de Matem\'atica Aplicada, IMECC--UNICAMP, 13083-859 Campinas, SP, Brazil}
\author{Morgan Le Delliou}
\email{delliou@lzu.edu.cn,delliou@ift.unesp.br}
\affiliation{Institute of Theoretical Physics, Physics Department, Lanzhou University, 
No.222, South Tianshui Road, Lanzhou, Gansu 730000, P R China}
\altaffiliation[Also at ]{Instituto de Astrof\'isica e Ci\^encias do Espa\c co, Universidade de Lisboa, Faculdade de Ci\^encias, 
Ed. C8, Campo Grande, 1769-016 Lisboa, Portugal}


\author{Jos\'e P. Mimoso}%

\email{jpmimoso@fc.ul.pt}

\affiliation{
 Departamento de F\'{i}sica and Instituto de  Astrof\'{i}sica e Ci\^encias do Espa\c co, Faculdade de Ci\^{e}ncias da
Universidade de Lisboa, Campo Grande, Ed. C8 1749-016 Lisboa,
Portugal}



\date{\today}

\begin{abstract}
The Birkhoff theorem is a well-known result in general relativity and it is used in 
many applications. However, its most general version, due to Bona, is almost 
unknown and presented in a form less accessible to the relativist and 
cosmologist community. Moreover, many wield it mistakenly as a simple 
transposition of Newton's iron sphere theorem. In the present work, we propose 
a modern, dual null, presentation -- useful in many explorations, including 
black holes -- of the  theorem that renders accessible most of the results of  
Bona's version. In addition, {we discuss the fluid 
contents admissible for the application of the theorem, beyond a
vacuum,
{and we demonstrate how the formalism greatly simplifies solving the dynamical equations and allows one to express the solution as a power expansion in $r$. We present a 
family of solutions that share the properties predicted by the Birkhoff theorem and  discuss the existence of trapped and antitrapped regions. The formalism manifestly shows how the type of region --- trapped or untrapped --- determines the character of the Killing vector}
}.
\end{abstract}

\pacs{Valid PACS appear here}
\maketitle


\section{Introduction}
The Birkhoff theorem \cite{Birkhoff23, 
jebsen1921general,jebsen2005general}\footnote{It was recently realised that 
Jebsen's formulation predates the contribution of Birkhoff.} states that the 
vacuum spherically
symmetric solutions are static and independent of changes in the matter
distribution sourcing the gravitational field, provided
the latter changes preserve the spherical symmetry. It is often referred as the general relativistic
counterpart of Newton's iron sphere theorem \cite{BnT87, Fleury:2016tsz}, yet 
one should be wary that this is  justified only when one is dealing with the 
gravitational field in vacuum (where the case of a cosmological constant is 
included). {Indeed as pointed out in Ref. \cite{Kim:2016osp}, Birkhoff’s theorem is  
commonly misinterpreted as  determining only the gravitational field inside a 
spherically symmetric matter distribution by its enclosed mass, while the static 
thin spherical shell surrounding a spherical central object initially proposed 
by Ref. \cite{Zhang:2012zb} demonstrates that the intermediate vacuum region's 
gravity depends also on the outer shell's mass.} {We may speculate that the similarity between the field equations of general relativity (GR) in the case of spherical symmetry, and the Newtonian equations for a central field induces this misunderstanding. However, this equivocated procedure  oversees the nonlinearity of GR which distinguishes it markedly from Newtonian gravity.}

In popular textbooks  such as  Hobson and Lasenby  \cite{HobsonEtal06},
the presentation of the theorem is, for short, that
the only vacuum solution with spherical symmetry is Schwarzschild's (although
the original formulation was that the only vacuum solution with spherical
symmetry is static \footnote{Notice that the cited authors, \cite{HobsonEtal06}, 
emphasize the fact that this contrasts with Newton's iron sphere theorem  which 
does not involve time independence.}).
Physically, the Birkhoff theorem implies that if a spherically
symmetric star undergoes strictly radial pulsations, then it cannot
propagate any disturbance into the surrounding  space \footnote{These two 
statement can also be found in other textbooks: eg. \cite{dInverno92}}.
This is obviously related to the fact that the  lowest multipolar radiation that 
propagates in general relativity is quadrupole radiation.

In the present work, we consider the Birkhoff theorem and discuss it from a 
formulation particularly fruitful for the exploration of causal structures, in 
particular of dynamical black holes, based on the behavior of the expansion of 
null congruences. This so-called dual null formalism  is of great interest, as 
it is particularly adequate and useful to deal with dynamical black holes, and  
underlies many recent results regarding both the thermodynamics of black holes, 
and more general cosmological settings \cite{Hayward:1993mw, 
Bak:1999hd,Cai:2006rs,Hayward:1997jp, Binetruy:2014ela, Maciel:2015vva}.

From an observational point of view, all the information we get from the 
Universe reaches us through null paths \cite{Bartelmann:1999yn, Ade:2013sjv, 
Fleury:2014rea}. In fact, both electromagnetic radiation and the recently 
detected gravitational waves \cite{Abbott:2016blz,TheLIGOScientific:2017qsa} 
travel on null congruences,
and hence the dual null formalism, being developed from the consideration of 
null vectors, presents itself as a particularly  appropriate tool to connect 
theoretical discussions and an understanding of the observable Universe. Although a 
known result, the understanding of the  Birkhoff theorem can 
benefit from those modern tools. 

We show in this work that the dual null formalism   allows us to extend the 
Birkhoff theorem to more general geometric frameworks, such as planar or 
cylindrically symmetric spacetimes, as  well as ADS/CFT settings 
\cite{Maldacena:1997zz, birmingham1999topological, hartnoll2008building}. 
Although the generalization of the Birkhoff theorem to the latter geometrical cases 
has been previously obtained in the literature --- see Stephani 
et al. \cite{StephaniEtal03} and 
references therein -- this fact is widely ignored and was 
derived in a different way in the present work. Moreover, 
besides characterizing {naturally} the admissible matter models
that are compatible with the theorem, the dual null formalism allows us to 
find all the solutions for sources that 
can be expressed as a power series on $r$ in a simple way. Finally, the dual null formalism manifestly shows that the character of the theorem's additional Killing vector, timelike or spacelike, naturally follows from the type of region it applies, trapped or untrapped.

The outline of the present work is as follows. In Sec. \ref{sec:level1} we briefly review the literature in connection with the Birkhoff theorem. This will enable us to situate our work with regard to the alternative approaches to its derivation, as well as to some of the efforts pursued in the literature to generalize it. In Sec. \ref{Sec:BT_2null}
we present the dual null formalism used in this work, and develop the new proof of the Birkhoff theorem.  In Sec. \ref{Sec:Beyond}, we discuss the symmetry requirements that are actually needed, and obtain the most general  admissible matter models which are compatible with the theorem. Finally we give a brief discussion of our results in Sec. \ref{Conclusion}.


A quick remark on the notation: In most instances we use the abstract index notation as in Wald's textbook
\cite{wald-book}.
However,  we swap to the intrinsic mathematical notation (without indices) when it is convenient. The translation from the two notations can  be readily made by the use of the base vectors and 1-forms. For a vector $V^a$ we write $V^a = V^\mu \partial_\mu = \partial_V$ and for a 1-form $\omega_a = \omega_\mu \ud x^\mu$. If $\omega_a = \partial_a f$, for a scalar function $f$, then $\omega = \ud f = \partial_\mu f \ud x^{\mu}$.

\section{\label{sec:level1}General formulations of the Birkhoff theorem}

A generalized and geometrically minded version of the Birkhoff theorem was put forward by Goenner \cite{Goenner70}, pointing out that 
the  theorem relies on the existence of a three-parameter group of (global) isometries with two-dimensional non-null orbits and of an additional Killing vector associated with a $G_4$ group of motions \footnote{(Goenner's abstract) The Einstein tensors of metrics having a three-parameter
group of (global) isometries with two-dimensional non-null orbits G3(2,s/t)
are studied in order to obtain algebraic conditions guaranteeing an
additional normal Killing vector. It is shown that Einstein spaces
with G3(2,s/t) allow a G4 group. A critical review of some of the literature
on Birkhoff's theorem and its generalizations is given.}. The
Birkhoff theorem for spherically symmetric vacuum solutions and
the Taub theorem for plane-symmetric vacuum solutions were both generalized
to vacuum solutions with conformal symmetries. In particular, it was proved
that any conformally spherically (respectively,  plane-) symmetric
vacuum solution to the Einstein equations must be the Schwarzschild (respectively,
either Taub-Kasner or flat) solution. 

 { Upgraded versions of the theorem can be found 
in Refs.
\cite{StephaniEtal03,1973CMaPh..33...75B,Goenner70,bronnikov1980generalisation}}
, and the most evolved 
phrasing for this geometric approach is due to Bona \cite{Bona(1988)}, for metrics 
that are conformally reducible, that is ${g} = Y^2 \hat{g}$, where $\hat{g}$ is 
reducible as the metric of a direct product spacetime. Let 
\begin{gather}
\ud s^2= Y^2(x^C) \left(\gamma_{AB}\ud x^A \ud x^B + h_{\alpha \beta}\ud y^\alpha \ud y^\beta\right) \,, \label{eq:reducible-metric}
\end{gather}
where $h_{\alpha \beta}$ and $y^\alpha$ are a two-dimensional metric and a 
coordinate system, respectively, on the two-dimensional orbits $O_2$ of $G_3$. Analogously, 
$\gamma_{AB}$ and $x^A$ the corresponding metric and coordinates on $V_2$, which 
is the orthogonal submanifold to $O_2$ according to $g$. Bona's statement of the
Birkhoff theorem is

\begin{quotation}\label{BTheorem}
Theorem \cite{Bona(1988)}: Metrics with a group $G_{3}$ of motions on non-null orbits
$O_{2}$ and with Ricci tensors of type $[(11)(1,1)]$ and $[(111,1)]$
admit a group $G_{4}$ provided that $\ud Y \neq0$ . 
\end{quotation}
emphasizing the requirement  $\ud Y\neq0$  and the appropriate Segr\'e types \cite{Bona(1988)}.

Other attempts at generalizing the Birkhoff theorem can be found in the literature. 
{Generalization to higher dimensions was achieved by K. A. 
Bronnikov and V. N. 
Melnikov \cite{Bronnikov:1994ja} and a thorough discussion on the 
relationship between manifold dimensionality and the existence of 
Birkhoff-like theorems was made by H.-J. Schmidt \cite{Schmidt:2012wj}}. 
R. Goswami and G.F.R. Ellis 
\cite{Goswami:2012jf,Goswami:2011ft,Ellis:2013dla} have investigated the 
possibility of extending it by analyzing whether the theorem remains approximately 
true both for an approximately spherical vacuum solution \cite{Goswami:2012jf}, 
and also for an approximately vacuum configuration \cite{Goswami:2011ft}. 
They 
resort to the analysis of perturbations with the 1+1+2 formalism developed by 
Clarkson \cite{Clarkson:2007yp}. The difficulties associated with this pragmatic 
line of research stem from the need to remain in the neighborhood of the vacuum 
spherically symmetric models, and, thus of defining the conditions that guarantee 
the existence of such neighborhood.  


Following a diverse path, Hern\'andez-Pastora \cite{HernandezPastora:2009zz} 
pursued an
attempt to get a relationship between the spherical symmetry and the multipole structure of the so-called monopole solution.

The Birkhoff theorem was also investigated in connection with conformal rescaling 
\cite{1984PhRvL..53..315R}, with the possibility of extending it to modified 
theories of gravity \cite{Deser:2005gr,Faraoni:2010rt,Nzioki:2013lca}, 
{with 
different hypotheses, as in Ref. \cite{Bronnikov:2016dhz} --- where the {key condition (for Bona)} $\ud Y \neq 0$ 
is abandoned and the theorem still applies under some additional conditions on 
the matter sources ---} and with regard to many 
other features 
\cite{Deser:2004gi,Faraoni:2017uzy,Faraoni:2018mes,2010CQGra..27t5024F}.

\section{The Birkhoff theorem in Dual null formalism}\label{Sec:BT_2null}

{In this section, we briefly present the main tools of the dual 
null formalism and apply them to prove and discuss the necessary conditions for 
the validity of the Birkhoff theorem. It is worth pointing out that the dual 
null formalism is distinct from operating in null coordinates, as it deals with optical scalars related to null congruences. Such 
quantities are independent of coordinate choice and can be analyzed in any 
coordinate set. Null coordinates are useful in order to represent and compute 
more simply the relevant quantities of the dual null formalism, and we take 
advantage of this in the following.}

\subsection{Spherically symmetric spacetimes and dual null formalism}

In dual null coordinates, any spherically symmetric metric can be 
parametrized as
\begin{gather}
\ud s^2 = - e^{f} \left( \ud u \, \ud v + \ud v \, \ud u\right) + r^2 \left( \ud \theta^2 + \sin^2 \theta \ud \phi^2 \right)\,, \label{eq:metric-dualnull}
\end{gather} 
where $f = f(u,v)$, $r= r(u,v)$ and we omit the tensor product symbol $\otimes$ for short. Metric \eqref{eq:metric-dualnull} is of the form \eqref{eq:reducible-metric} for $Y = r(u,v)$, $\gamma_{AB}\,\ud x^A \, \ud x^B = -\frac{e^f}{r^2}\,\ud u \, \ud v$ and $h_{\alpha \beta} \,\ud y^\alpha \, \ud y^\beta = \ud \theta^2 + \sin^2  \theta \, \ud \phi^2$.

The coordinates in Eq.~\eqref{eq:metric-dualnull} are also a codimension-two foliation of the spacetime. The orbits  of the $G_3$ group, here the group of rotations in three dimensions, are two-dimensional spheres corresponding to $O_2$. Each two-dimensional sphere is characterized by the pair $(u\,,v)$, that are the coordinates on $V_2$.

The null coordinates on $V_2$ are not unique. Hence, by making a coordinate change of the form $(u\,, v) \to (U\,, V)$:
\begin{gather}
u \to U(u)\, \quad v \to V(v) \, , \label{eq:gauge}
\end{gather}
with $U'(u) > 0$ and $V'(v) > 0$ for all $u, \, v$, in order to not reverse the orientation of the new coordinates. We obtain a new pair of dual null coordinates:
\begin{gather}
\ud s^2 = - e^{F(U,V)} (\ud U \ud V +\ud V \ud U) + r^2 (U, V) \ud \Omega^2\,,
\end{gather}
with
\begin{gather}
F(U,V) = f(u(U), v(V)) - \ln U'(u(U)) - \ln V'(v(V)) \,. \label{eq:Fredefined}
\end{gather}

Let $k^a$ be a null vector field  orthogonal to the orbits of the coordinates 
$\theta$ and $\phi$ everywhere in the spacetime. We define its expansion 
$\Theta_{(k)}$ as the relative variation of the area form on the orthogonal 
spheres when transported along the integral curves of $k^a$:
\begin{gather}
\Theta_{(k)}= \frac{\Lie_k(r^2 \sqrt{\det h})} {r^2 \sqrt{\det h}} = \frac{2}{r} 
 k^a \partial_a r\,, \label{eq:expansion-definition}
\end{gather}
where $\Lie_k$ is the Lie derivative with respect to $k^a$.
Using the coordinate base vectors $\partial_u$ and $\partial_v$ we define 
the two null expansions related to our coordinates in 
Eq.~\eqref{eq:metric-dualnull}:
\begin{gather}\label{eq:nullexpansion}
\Theta_{(u)} = \frac{2}{r}\partial_u r \, , \quad \Theta_{(v)} =\frac{2}{r} \partial_v  r \, ,.
\end{gather}
The null expansions transform under \eqref{eq:gauge} as
\begin{gather}
\Theta_{(u)} \to U'(u) \Theta_{(u)} \, , \quad \Theta_{(v)} \to V'(v) 
\Theta_{(v)}\, \label{eq:gaugetransform}.
\end{gather}

We see that the value of the null expansions depends on the coordinate choice, 
but their sign and the locus where they vanish are not. Based on this,  we may 
classify each sphere in the spacetime as
\begin{itemize}
\item regular, normal or untrapped, if $\Theta_u \Theta_v < 0$;
\item trapped or future trapped, if $\Theta_u \Theta_v > 0$ and $\Theta_u < 0$;
\item antitrapped or past trapped, if $\Theta_u \Theta_v > 0$, and $\Theta_u > 0$.
\item marginal, if $\Theta_u \Theta_v$ = 0.
\end{itemize}

This classification has been an important tool in the study of black hole 
physics, especially in the case of dynamical solutions (see, for example, 
Refs. \cite{Hayward:1993mw,Hayward:1994bu,Senovilla:2011fk} and references 
therein, for motivation and applications of this formalism).

Let the basis forms related to the coordinates $(u\,, v)$ be denoted $\ud u$ 
and $\ud v$. Since $\Theta_{(u)} \ud u$ is invariant under change of coordinates 
\footnote{Or, equivalently, $\Theta_{(u)}$ transforms as a component of a 
covariant vector.}, we may build a 1-form $\mathcal{K}_a$ called the mean 
curvature form as
\begin{gather}
\mathcal{K}_{a} = \Theta_{(u)} \partial_{a} u + \Theta_{(v)} \partial_{a} v\,, \label{eq:mean-curvature}
\end{gather}
where $\partial_a u$ and $\partial_a v$ are the abstract index notation version of $\ud u$ and $\ud v$, respectively.

With the aid of the mean curvature form, we are able to express simply the null expansion respective to any null vector field by just contracting it to $\mathcal{K}_a$:
\begin{gather}
\Theta_{(k)}= k^a \mathcal{K}_a\,,
\end{gather}
\noindent
for any $k^a$ null and orthogonal to $O_2$. We may also generalize the 
definition for any vectors in $V_2$, be it time- or spacelike, by defining 
what we call the 2-expansion, in order to distinguish it from the usual 
expansion defined as the divergence of timelike vector fields, as was made in Ref.
\cite{Maciel:2015vva}  in order to deal with the separation between collapse and cosmological expansion (see \cite{Lasky:2007ky,Lasky:2006zz,LaskyLun06+,LaskyLun07,LaskyLun06b, Delliou:2009dm,Delliou:2009dc,LeDMM09a,Delliou:2013xra,Mimoso:2011zz,Mimoso:2013iga,MLeDM09}) and as a tool to define dynamical universal horizons in Ref. \cite{Maciel:2015ypv} (see \cite{Barausse:2011pu,Blas:2011ni,Berglund:2012fk,Berglund:2012bu,Bhattacharyya:2014kta,Lin:2014ija,Lin:2014eaa,Lin:2016myf,Saravani:2013kva,Tian:2015vha}). Let be $X^a$ any vector orthogonal to the orbits $O_2$; then its 2-expansion, denoted $\Theta_{(X)}$ is defined as
\begin{gather}
\Theta_{(X)} = X^a \mathcal{K}_a\,.
\end{gather}

For the expansions of the null coordinate basis, the Raychaudhuri equations are written as
\begin{subequations} \label{eq:Raychaudhuri}
\begin{gather}
\Lie_u \Theta_{(u)} -  \Theta_{(u)} \, \partial_u f + \frac{\Theta_{(u)}^2}{2} + R_{uu}=0\, ,\label{eq:Rayuu}\\ 
\Lie_v \Theta_{(v)} -  \Theta_{(v)} \, \partial_v f + \frac{\Theta_{(v)}^2}{2} + R_{vv}=0\,\label{eq:Rayvv} ,
\end{gather}
\end{subequations}

The $uv$ component of the Einstein tensor may be written in terms of the null expansions as
\begin{gather}
 G_{uv}=\Lie_v \Theta_{(u)} + \Theta_{(u)} \Theta_{(v)} + \frac{e^f}{r^2}\,.\label{eq:Guv}
\end{gather}
Note that since $\partial_u$ and $\partial_v$ are coordinate base vectors, they commute, and then $\Lie_u \Theta_{(v)} = \Lie_v \Theta_{(u)}$.

Equations~\eqref{eq:Raychaudhuri} and \eqref{eq:Guv} together with Einstein's equation
\begin{gather}
G_{ab} = T_{ab} \,, \label{eq:einstein}
\end{gather}
for a given energy-momentum tensor $T_{ab}$ capture the full 
dynamics of the problem and completely determine a spherically symmetric 
solution. 

\subsection{Properties of vacuum spacetimes} \label{sec:vacuum}

Until this point, the only hypothesis made on the spacetime was spherical 
symmetry. In this section we also assume that it satisfies Einstein's equation 
in vacuum in an open domain $\mathcal{D}$, of the form $\mathcal{D}_2 \times 
O_2$ where $\mathcal{D}$ is the image under the coordinate map $(u,v)$ of an 
open domain of $\mathbb{R}^2$. This domain $\mathcal{D}$ can be described as a 
spherical shell with finite thickness that lasts for some finite time interval.

On $\mathcal{D}$, $R_{ab} = G_{ab} = 0$. The full dynamics are determined in terms of the null expansions by the three equations below
\begin{subequations}\label{eq:vacuumeqs}
\begin{gather}
\Lie_u \Theta_{(u)} -\Theta_{(u)} \, \partial_u f + \frac{\Theta_{(u)}^2}{2} = 0\, ,\label{eq:thetau-equation}\\
\Lie_v \Theta_{(v)} -  \Theta_{(v)} \,\partial_v f  +\frac{\Theta_{(v)}^2}{2} = 0\, , \label{eq:thetav-equation}\\
\Lie_v \Theta_{(u)} + \Theta_{(u)} \, \Theta_{(v)} + \frac{e^f}{r^2} = 0\,. \label{eq:thetauv-equation}
\end{gather}
\end{subequations}

Using Eqs.~\eqref{eq:vacuumeqs}, we can deduce several general results valid for vacuum solutions that we present in the following.

\begin{proposition}
Given the hypotheses above, there exists a pair of dual null coordinates $(U, V)$ such that $|\Theta_{(U)} (U,V)| = |\Theta_{(V)} (U, V)|$ in $\mathcal{D}$. \label{prop:equal-expansions}
\end{proposition}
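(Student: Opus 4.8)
The plan is to exploit the residual gauge freedom \eqref{eq:gauge} and to reduce the statement to showing that, in vacuum, the ratio of the two null expansions factorizes as a product of a function of $u$ and a function of $v$. From the transformation law \eqref{eq:gaugetransform}, under $(u,v)\to(U(u),V(v))$ with $U',V'>0$ the desired equality $|\Theta_{(U)}|=|\Theta_{(V)}|$ is equivalent to
\[
\frac{U'(u)}{V'(v)} = \left|\frac{\Theta_{(v)}}{\Theta_{(u)}}\right|\,.
\]
Since the left-hand side is by construction a quotient of a function of $u$ alone by a function of $v$ alone, such a gauge exists precisely when the right-hand side has the same separable form. The whole content of the proposition is therefore to establish this separability from the vacuum equations; note that outside vacuum the ratio would be a generic function of $(u,v)$ and no such gauge need exist.

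First I would rewrite the vacuum Raychaudhuri equations \eqref{eq:thetau-equation} and \eqref{eq:thetav-equation} directly in terms of $r$. Substituting $\Theta_{(u)}=(2/r)\,\partial_u r$ into \eqref{eq:thetau-equation}, the quadratic term and the Lie-derivative term combine so that the equation collapses to $\partial_u^2 r = (\partial_u r)(\partial_u f)$, and symmetrically \eqref{eq:thetav-equation} gives $\partial_v^2 r = (\partial_v r)(\partial_v f)$. Dividing by $\partial_u r$ (respectively $\partial_v r$) and integrating in $u$ (respectively $v$), I obtain $\partial_u\ln|\partial_u r| = \partial_u f$ and $\partial_v\ln|\partial_v r| = \partial_v f$, hence $\partial_u r = e^{f}\,a(v)$ and $\partial_v r = e^{f}\,b(u)$ for some functions $a$ and $b$ of a single variable. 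This is the key step.

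The ratio then follows immediately, $|\Theta_{(v)}/\Theta_{(u)}| = |b(u)|/|a(v)|$, which is manifestly separable, and I would close the argument by choosing $U'(u)=|b(u)|$ and $V'(v)=|a(v)|$ (any common positive rescaling is equally admissible, as one sees by demanding $U'/|b|=V'/|a|=\mathrm{const}$); a direct check gives $|\Theta_{(U)}|=|\Theta_{(V)}|=2e^{f}|a(v)|\,|b(u)|/r$. The main obstacle is the positivity requirement $U',V'>0$ inherited from \eqref{eq:gauge}: the construction breaks down exactly on marginal spheres, where $\partial_u r$ or $\partial_v r$ vanishes and $a$ or $b$ acquires a zero. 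Strictly, the statement should therefore be read on the subdomain of $\mathcal{D}$ on which neither expansion vanishes, the marginal loci being treated separately or by continuity. I would also verify that the integration constants hidden in $a$ and $b$ cause no obstruction — they do not, since they merely rescale $U$ and $V$ by positive constants.
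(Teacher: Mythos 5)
Your proof is correct and follows essentially the same route as the paper's: you integrate the vacuum Raychaudhuri equations to get $\partial_u r = e^{f}a(v)$ and $\partial_v r = e^{f}b(u)$ (the paper writes the identical relations as $r\Theta_{(u)} = C_1(v)e^{f}$ and $r\Theta_{(v)} = C_2(u)e^{f}$, the same statement up to a factor of $2$), and then fix the gauge by $U'=|b|$, $V'=|a|$, exactly matching the paper's choice $U(u)=\int^u |C_2(s)|\,\ud s$, $V(v)=\int^v |C_1(s)|\,\ud s$. Your explicit caveat that the argument requires both expansions to be nonvanishing (excluding marginal spheres) is a point the paper handles only implicitly by declaring $C_1$, $C_2$ nonvanishing, so this is a welcome clarification rather than a deviation.
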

\begin{proof}
We can rewrite Eq.~\eqref{eq:thetau-equation} as
\begin{gather}
\partial_u \Theta_{(u)}  - \Theta_{(u)}\,  \partial_u f + \frac{\Theta_{(u)} ^2}{2} = 0 \Rightarrow \nonumber\\
r^{-1} \partial_u (\Theta_{(u)} r ) - \Theta_{(u)} \,\partial_u f = 0 \Rightarrow \nonumber\\
\frac{ \partial_u (\Theta_{(u)} r)}{\Theta_{(u)} r}  = \partial_u f \Rightarrow \nonumber\\
r\Theta_{(u)}  = C_1 (v) e^f \,,
\end{gather}
where $C_1(v)$ is an arbitrary nonvanishing function that comes from the integration in $u$. Repeating the same procedure with Eq.~\eqref{eq:thetav-equation} we obtain
\begin{gather}
r\Theta_{(v)} = C_2 (u) e^f\,,
\end{gather}
where $C_2(u)$ is also an arbitrary function.
For any functions $C_1$ and $C_2$, we can make a gauge transformation $(u, v) \rightarrow (U, V)$ of the form Eqs.~\eqref{eq:gauge} with the choice:
\begin{gather}
U(u) = \int^u |C_2 (s)| \ud s\, \nonumber\\
V(v) = \int^v |C_1 (s)| \ud s\,,
\end{gather}
noting that $U'(u)\,, \,\, V'(v) > 0$, for all $u, v$, in order to have a well-behaved coordinate transformation.  We obtain
\begin{gather}
r\Theta_{(U)} =|C_2(u)| C_1 (v) e^f \,, \nonumber\\
r\Theta_{(V)} =| C_1 (v)| C_2 (u) e^f\,. \label{eq:UVequations}
\end{gather}

Dividing  Eqs.~\eqref{eq:UVequations} by each other, we obtain the wished result.

\end{proof}

This result shows that there exists one special set $(U,V)$ of dual null 
coordinates in vacuum spherically symmetric spacetimes for which the two null 
expansions have the same absolute value at each event on $\mathcal{D}$. Note that 
this special set of dual null coordinates is unique up to a constant rescaling. 

The next proposition shows that this special pair of null coordinates is useful to 
reduce the dynamical equations to equations on only one independent coordinate.

\begin{proposition} \label{prop:killingvector}
Let  $\Theta_{(U)}  +\Theta_{(V)} = 0$ ($\Theta_{(U)} - \Theta_{(V)}=0$) and the new coordinates $\chi_{\pm} =\frac{1}{2} \left(U \pm V\right)$. We denote with $\partial_{\pm}$ the derivatives with respect to $\chi_{\pm}$.

Then
\begin{enumerate}[i.]
\item $\partial_+ \Theta_{(U)} = \partial_+ \Theta_{(V)} = 0$ ($\partial_- \Theta_{(U)} = \partial_- \Theta_{(V)} = 0$).\label{dchitheta}

\item $\partial_+ r(U,V) = 0$ ($\partial_- r(U, V) =0$).
\item If $\Theta_{(U)} \neq 0$, then $\partial_+ f = 0$ ($\partial_- f = 0$).
\item If $\Theta_{(U)} + \Theta_{(V)}=0$ ($\Theta_{(U)} - \Theta_{(V)}=0$), then $\partial_+$ ($\partial_-$) is a Killing vector.
\end{enumerate} 
\end{proposition}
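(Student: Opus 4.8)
The plan is to work in the special dual null coordinates $(U,V)$ furnished by Proposition~\ref{prop:equal-expansions} and to treat the case $\Theta_{(U)}+\Theta_{(V)}=0$, the other case being identical after exchanging the roles of $\chi_+$ and $\chi_-$. From $\chi_\pm=\tfrac12(U\pm V)$ one has the constant-coefficient relations $\partial_+=\partial_U+\partial_V$ and $\partial_-=\partial_U-\partial_V$, so $\partial_+$ commutes with both $\partial_U$ and $\partial_V$. I would establish the four items in the order (ii), (i), (iii), (iv), since each feeds the next. The ingredients are the definition \eqref{eq:nullexpansion} of the expansions, the rearranged focusing equations \eqref{eq:thetau-equation}--\eqref{eq:thetav-equation} as manipulated in the proof of Proposition~\ref{prop:equal-expansions}, and the gauge relation $\Theta_{(V)}=-\Theta_{(U)}$.

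Items (ii) and (i) are purely kinematic. For (ii), \eqref{eq:nullexpansion} gives $\partial_+ r=\partial_U r+\partial_V r=\tfrac{r}{2}\big(\Theta_{(U)}+\Theta_{(V)}\big)$, which vanishes by hypothesis. For (i), I differentiate $\Theta_{(U)}=\tfrac{2}{r}\partial_U r$ along $\partial_+$: since $\partial_+$ commutes with $\partial_U$, every term of $\partial_+\Theta_{(U)}$ contains a factor $\partial_+ r$ or $\partial_U(\partial_+ r)$, both zero by (ii); the same holds for $\Theta_{(V)}$. Note that these two items invoke only the hypothesis and the definition of the expansions, the dynamics entering solely through the \emph{existence} of the gauge of Proposition~\ref{prop:equal-expansions}.

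Item (iii) is the crux. Rearranging \eqref{eq:thetau-equation} and \eqref{eq:thetav-equation} exactly as in the proof of Proposition~\ref{prop:equal-expansions} yields $\partial_U f=\partial_U\ln\!\big(r\,\Theta_{(U)}\big)$ and $\partial_V f=\partial_V\ln\!\big(r\,\Theta_{(V)}\big)$, where $f$ is the metric function of the special chart, i.e.\ the $F$ of \eqref{eq:Fredefined}. Adding these and using $\partial_+ r=0$ from (ii) to cancel the $r$-terms leaves $\partial_+ f=\partial_U\Theta_{(U)}/\Theta_{(U)}+\partial_V\Theta_{(V)}/\Theta_{(V)}$; substituting $\Theta_{(V)}=-\Theta_{(U)}$ recasts the second quotient as $\partial_V\Theta_{(U)}/\Theta_{(U)}$, so that $\partial_+ f=\partial_+\Theta_{(U)}/\Theta_{(U)}=0$ by (i). The assumption $\Theta_{(U)}\neq0$ is exactly what justifies these divisions, and it enters nowhere else. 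I expect this to be the main obstacle: each focusing equation on its own is compatible with $f$ varying along $\chi_+$, and only their symmetric sum, combined with the equal-expansion gauge, forces $\partial_+ f=0$.

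For item (iv), I would transform \eqref{eq:metric-dualnull}, written in $(U,V)$, to $(\chi_+,\chi_-,\theta,\phi)$. The cross term $-e^{f}(\ud U\,\ud V+\ud V\,\ud U)$ becomes diagonal, giving $g_{++}=-2e^{f}$, $g_{--}=+2e^{f}$, $g_{+-}=0$, while the angular block $r^2\,\ud\Omega^2$ is unchanged. By (ii) and (iii) every component is independent of $\chi_+$, hence $\Lie_{\partial_+}g=0$ and $\partial_+$ is a Killing vector. As a consistency check with the paper's theme, $g_{++}=-2e^{f}<0$ makes $\partial_+$ timelike, in accord with $\Theta_{(U)}+\Theta_{(V)}=0$ forcing $\Theta_{(U)}\Theta_{(V)}<0$ (an untrapped region); the companion case $\Theta_{(U)}-\Theta_{(V)}=0$ proceeds identically and yields the spacelike Killing vector $\partial_-$ proper to a (anti)trapped region.
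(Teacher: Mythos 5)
Your proposal is correct, and its overall strategy coincides with the paper's: kinematic identities for items (i)--(ii), the vacuum focusing equations \eqref{eq:thetau-equation}--\eqref{eq:thetav-equation} for item (iii), and $\chi_+$-independence of the metric components for item (iv). The one genuine difference is your treatment of the first two items, and it is an improvement. The paper proves (i) first, from the commutation identity $\partial_V\Theta_{(U)}=\partial_U\Theta_{(V)}$ together with $\Theta_{(V)}=\mp\Theta_{(U)}$, and then deduces (ii) from (i) by a roundabout integration argument ($\partial_+\ln r^2$ equals both a function of $U$ and a function of $V$, hence a constant $C$, which is then shown to vanish by recomputing the expansions from $\ln r^2=C\chi_++H(\chi_-)$). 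You reverse the order: (ii) is the one-line consequence $\partial_+ r=\tfrac{r}{2}\bigl(\Theta_{(U)}+\Theta_{(V)}\bigr)=0$ of the definition \eqref{eq:nullexpansion}, and (i) then follows by Leibniz plus commutation. This is shorter and makes transparent that (i)--(ii) are purely kinematic consequences of the equal-expansion gauge, the dynamics entering only through the existence statement of Proposition~\ref{prop:equal-expansions}. For (iii) the two routes are equivalent: the paper subtracts the focusing equations to get $(\partial_U\pm\partial_V)\mO-\mO(\partial_U\pm\partial_V)f=0$ and invokes (i), while you add their log-derivative forms; the only caveat is that you should write $\partial_U f=\partial_U\bigl(r\Theta_{(U)}\bigr)/\bigl(r\Theta_{(U)}\bigr)$ as a ratio or with absolute values, since $r\Theta_{(V)}<0$ under your sign convention and $\ln\bigl(r\Theta_{(V)}\bigr)$ is then undefined --- a cosmetic fix. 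Item (iv) is the same computation in both proofs (your explicit diagonalization $g_{++}=-2e^f$, $g_{--}=2e^f$, $g_{+-}=0$ is exactly what the paper's Lie-derivative formula and Proposition~\ref{prop:killing-character} encode), and both your argument and the paper's implicitly use item (iii), hence $\mO\neq0$, in item (iv); that reliance is inherited from the statement itself, not a defect of your write-up.
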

\begin{proof}
First, we remark that $\partial_\pm = \partial_U \pm \partial_V$.
\begin{enumerate}[i.]

\item Let $\mO = \Theta_{(U)} = \mp \Theta_{(V)}$. Since $\partial_U$ and $\partial_V$ commute, we have
\begin{gather}
\partial_V \Theta_{(U)} = \partial_U \Theta_{(V)} \Rightarrow \nonumber\\
\partial_V \mO = \mp \partial_U \mO \Rightarrow 
\left( \partial_U \pm \partial_V \right) \mO =0\,,
\end{gather} 
where the sign choice depends directly on the choice in $\Theta_V \pm \Theta_U = 0$.

\item Consider the case $\Theta_{(U)}+ \Theta_{(V)}=0$, the other case being similar. By Eq.~\eqref{eq:nullexpansion}, $\mO = \partial_U \ln r^2 = - \partial_V \ln r^2$. Considering that $\partial_+$ commute with both $\partial_U$ and $\partial_V$ and item~(\ref{dchitheta}), we have
\begin{gather}
0 = \partial_+ \partial_U \ln r^2 = \partial_U \partial_+ \ln r^2 \Rightarrow \partial_+ \ln r^2 = C_1 (V)\, , \\
0 = \partial_+ \partial_V \ln r^2 = \partial_V \partial_+ \ln r^2 \Rightarrow \partial_+ \ln r^2 = C_2 (U)\, .
\end{gather}

Since $C_1(V) = C_2(U)$, they are constant. We can then write
\begin{gather}
\ln r^2 = C \chi_+ + H(\chi_-)\,,\label{eq:Cplush}
\end{gather}
with $H$ an arbitrary function and $C$ an arbitrary constant. Computing the expansions for $U$ and $V$ using Eq.~\eqref{eq:Cplush}, we obtain
\begin{gather}
\Theta_{(U)} = \frac{C+H'}{2} = - \frac{C - H'}{2} = - \Theta_{(V)} \Rightarrow
C = 0\,,
\end{gather}
which implies that the null expansions do not depend on $\chi_+$.

\item Adding Eq.~\eqref{eq:thetau-equation} and minus \eqref{eq:thetav-equation} and writing the expansion in terms of $\mO$, we obtain:
\begin{gather}
\left(\partial_U \pm \partial_V\right) \mO - \mO \left( \partial_U \pm \partial_V \right) f = 0\,.
\end{gather}
If $\Theta_U \pm \Theta_V$, the first term vanishes by the item~(\ref{dchitheta}). Therefore, if $\mO \neq 0$, then
\begin{equation}
 \left( \partial_U \pm \partial_V \right) f = 0\,.
 \end{equation}
 \item We denote by $\chi_{\pm}^a = \frac{\partial x^a}{\partial\chi^\pm}$, the components of $\partial_\pm$ in the coordinates system $x^a$. Then
\begin{gather}
\Lie_{\chi_\pm} \, g_{ab} = \chi_{\pm}^c \partial_c g_{ab}+ g_{ac}\partial_b \chi_{\pm}^c + g_{cb}\partial_a \chi_{\pm}^c = \partial_{\pm} g_{ab}=0\,,
\end{gather}
as the functions in the metric components, namely $f(U,V)$ and $r(U,V)$, do not depend on $\chi_\pm$.
\end{enumerate}
\end{proof}

In the next proposition we relate the classification of the spacetime region with the character of the Killing field.

\begin{proposition} \label{prop:killing-character}
If 
\begin{enumerate}[i.]
\item $\Theta_{(U)} = -\Theta_{(V)} \neq 0$, $\partial_+$ is a timelike Killing vector field.\label{item:timekilling}
\item $\Theta_{(U)} = \Theta_{(V)} \neq 0$, $\partial_-$ is a spacelike Killing vector field.\label{item:spacekilling}
\end{enumerate}
\end{proposition}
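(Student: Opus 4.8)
The plan is to combine the Killing property already established in Proposition~\ref{prop:killingvector} with a direct computation of the norm of the relevant vector, reading the metric coefficients off the dual null line element. In case~(\ref{item:timekilling}) the hypothesis $\Theta_{(U)} = -\Theta_{(V)} \neq 0$ is precisely the condition $\Theta_{(U)} + \Theta_{(V)} = 0$, so item~(iv) of Proposition~\ref{prop:killingvector} already guarantees that $\partial_+$ is a Killing vector; similarly, in case~(\ref{item:spacekilling}) the hypothesis $\Theta_{(U)} = \Theta_{(V)} \neq 0$ gives $\partial_-$ as a Killing vector. Thus only the causal character of these fields remains to be determined.

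To fix the character, I would use that in the $(U,V)$ coordinates the part of the line element on $V_2$ reads $-e^{F}(\ud U \, \ud V + \ud V \, \ud U)$, so that the only nonvanishing components of the two-dimensional metric are $g_{UV} = g_{VU} = -e^{F}$, with $g_{UU} = g_{VV} = 0$. Writing $\partial_\pm = \partial_U \pm \partial_V$ and contracting with the metric, I would compute
\begin{gather}
g(\partial_+, \partial_+) = g_{UU} + 2 g_{UV} + g_{VV} = -2 e^{F}\,, \\
g(\partial_-, \partial_-) = g_{UU} - 2 g_{UV} + g_{VV} = +2 e^{F}\,,
\end{gather}
where the sign is fixed because $e^{F}$ is an exponential and hence strictly positive throughout $\mathcal{D}$.

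The conclusion is then immediate: $g(\partial_+,\partial_+) < 0$ shows that $\partial_+$ is timelike, while $g(\partial_-,\partial_-) > 0$ shows that $\partial_-$ is spacelike, which together with the first paragraph proves both items. I would close by pointing out the physical reading already implicit in the classification of Sec.~\ref{Sec:BT_2null}: the hypothesis of case~(\ref{item:timekilling}) forces $\Theta_{(U)}\Theta_{(V)} < 0$, an untrapped region admitting a timelike (static) Killing field, whereas the hypothesis of case~(\ref{item:spacekilling}) forces $\Theta_{(U)}\Theta_{(V)} > 0$, a trapped or antitrapped region admitting a spacelike (homogeneous) Killing field. I expect no genuine obstacle here; the only point worth stressing is conceptual rather than computational, namely that the character is dictated entirely by the fixed Lorentzian signature of the $(U,V)$ block --- independently of $r$ and of the magnitude of the expansions --- with the sign hypothesis serving only to select which of $\partial_+$ or $\partial_-$ is the Killing direction.
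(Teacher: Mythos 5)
Your proposal is correct and follows essentially the same route as the paper's own proof: invoke Proposition~\ref{prop:killingvector} to get the Killing property, then read off the norm $g(\partial_\pm,\partial_\pm) = \mp 2e^{f} $ from the dual null metric and use the positivity of the exponential to fix the causal character. The only difference is that you write out the contraction with the metric components explicitly (and note the conformal-factor relabeling $f \to F$), which the paper states in one line.
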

\begin{proof}
\begin{enumerate}[i.]
\item From Proposition~\ref{prop:killingvector}, in this case $\partial_+ = \partial_U + \partial_V$ is a Killing vector field. Then
\begin{equation}
g_{ab} \chi_+^a \chi_+^b = -2e^f < 0\,.
\end{equation}
\item Analogously:
\begin{gather}
g_{ab} \chi_-^a \chi_-^b = 2e^f > 0\,.
\end{gather}
\end{enumerate}
\end{proof}

Propositions~\ref{prop:killingvector} and ~\ref{prop:killing-character} 
imply that spherically symmetric vacuum spacetimes are also static \footnote{It 
is straightforward to show that $\chi_{\pm}^a$ is hypersurface orthogonal. In 
particular, we have $\ud \chi_a^{\pm} =   \chi_a^\pm \wedge \ud f $.}, provided 
the region is \emph{regular} or \emph{untrapped}, that is, the null expansions 
have opposite sign.  Case~(\ref{item:spacekilling}) shows that in \emph{trapped 
regions}, where both null expansions have the same sign, an additional Killing 
vector field still exists, but it is spacelike and the region is not static, 
but spatially homogeneous.

Our construction also shows that the Killing field is always orthogonal to the orbits $O_2$ and the isometry it generates commutes with the $O_2$ rotations.

\section{Beyond Vacuum and Spherical Symmetry}\label{Sec:Beyond}

We used the dual formalism under the hypothesis that the spacetime is spherically symmetric such that we used a codimension-two foliation of the spacetime using the spheres corresponding to the orbits of the action of SO(3). We also assumed that it was a vacuum solution.

Since Bona proved the Birkhoff theorem under weaker hypotheses \cite{Bona(1988)}, in this section we explore better the conditions necessary in order to prove it under our formalism.

In the proofs above, the only relevant equations were the Raychaudhuri equation for the null congruences. If we can weaken the hypotheses while keeping Eqs.~\eqref{eq:thetau-equation} and \eqref{eq:thetav-equation} unchanged, we will obtain a stronger version of our result.

\subsection{Discussing the symmetry condition}

The consequence of spherical symmetry in Raychaudhuri equations is the fact that the shear and vorticity of the null congruences must vanish, which implies that the evolution of the expansions depends only on themselves. 

In order to guarantee the vanishing of the shear and vorticity of null congruences, we can replace spherical symmetry by any maximal symmetry for two-dimensional manifolds. Therefore, we may replace the hypothesis of spherical symmetry with the statement that  $h_{\alpha \beta}$ must have constant curvature, which includes planar and hyperbolic symmetries on $h_{\alpha \beta}$. The most general line element that preserves our equations is
\begin{gather}
\ud s^2 = - e^{f(u,v)} \left(\ud u \, \ud v + \ud v \, \ud u\right) + r^2(u,v)\, h_{\alpha \beta}\ud y^\alpha \ud y^\beta\,,\label{eq:generalmetric}
\end{gather}
where
\begin{gather}
h_{\alpha \beta}\ud y^\alpha \ud y^\beta = \ud \theta^2 + S_{\epsilon}^2(\theta) \ud \phi^2\,,
\end{gather}
where $\theta \in (0, \infty)$ and 

\begin{itemize}
\item $S_{1}(\theta) = \sin \theta$, for spherical symmetry;
\item $S_0(\theta) = \theta$, for planar symmetry; and
\item $S_{-1}(\theta) = \sinh \theta$, for hyperbolic symmetry. 
\end{itemize}

This change leaves Eqs.~\eqref{eq:Rayuu} and \eqref{eq:Rayvv} invariant, while Eq.~\eqref{eq:Guv} becomes
\begin{gather}
G_{uv} = \Lie_v \Theta_u + \Theta_u \, \Theta_v + \frac{\epsilon e^f}{r^2} \,, \label{eq:Guvepsilon}
\end{gather}
where $\epsilon$ take the values 1, 0, or -1, corresponding to the spherical, planar or hyperbolic symmetry, respectively.

This is equivalent to Bona's wording in terms of the $G_3$ group of symmetry with two-dimensional orbits $O_2$, with the difference that in Bona's paper\cite{Bona(1988)}, the only exigence on $O_2$ is that it is non-null. In our case, since we use dual null basis on $V_2$, $O_2$ must be spacelike. If the orbits $O_2$ are Lorentzian, the orthogonal vector space to the orbits is spacelike; therefore, the optical focusing equations we are using cannot be applied. In this sense, our formalism is less general than Bona's.

We could use a formalism similar to prove the Birkhoff theorem in that case, by using the corresponding focusing equations for spacelike geodesics. However, as spacelike geodesics are much less interesting under the physical point of view than the null cones, and one of the objectives of this work is to discuss the physical meaning of the hypotheses of the Birkhoff theorem, we will not pursue in this direction.

\subsection{Discussing the vacuum condition}

The vacuum condition has the only effect of making Raychaudhuri equations homogeneous, since $R_{uu} = R_{vv} = 0$.

Therefore, we should determine the broadest class of energy-momentum tensors -- or, equivalently, Ricci tensors -- that produces the same result.

Since $\partial_u \equiv u^a \partial_a$ and $\partial_v \equiv v^a \partial_a$ are null, the vanishing of the $uu$ component of the Ricci tensor is equivalent to $R_a^b u^a= \lambda_u (u,v) u^b $ and analogously to $\partial_v$. Therefore, $\partial_u$ and $\partial_v$ are two null eigenvectors of the Ricci tensor. Since,  $g_{uv} = u^a v_a = -e^f \neq 0$, their respective eigenvalues $\lambda_u$ and $\lambda_v$ coincide:
\begin{gather}
\lambda_v (v^b u_b )= (R_a^b v^a) u_b  = \nonumber\\ (R_a^b u^a )v_b = \lambda_u u^b v_b\,.
\end{gather}

If the two null basis vectors are eigenvectors with the same eigenvalue, we have that $\partial_u \pm \partial_v$ also are eigenvectors with $\lambda_u$ as their eigenvalue. This shows that the condition of vanishing $R_{uu}$ and $R_{vv}$ is equivalent to  imposing that the Ricci tensor have a timelike and a spacelike eigenvector in $V_2$, with the same eigenvalue.

As the induced metric in the symmetric orbits is of constant curvature, this implies that the restriction of $R_{ab}$ to the subspace tangent to the orbits is proportional to the metric itself, that is:
\begin{gather}
R_{AB} = R(u,v) g_{AB}\,,
\end{gather}
for $A, B \in \left\lbrace \theta , \phi\right\rbrace$. This implies that the Ricci tensor has two linearly independent eigenvectors $w^a$ and $z^a$, with the same eigenvalue. The space spanned by $w^a$ and $z^a$ is orthogonal to $\partial_u$ and $\partial_v$, therefore  tangent to the orbits of the angular coordinates. This means we have $R_{uu} = R_{vv}$ for Ricci tensors of the Segr\'e type [(1,1)(11)] (two pairs of double eigenvalues), or [(111,1)] (one quadruple eigenvalue). This is the same hypothesis for the Ricci tensor used in the generalized version by Bona.

\subsection{Admissible matter models}

By  Einstein's equations, the Segr\'e type of the Ricci tensor corresponds to the Segr\'e type of the energy-momentum tensor. Therefore, it is worth determining the most general matter model that satisfies the requirements for the application of the Birkhoff theorem.

The most general $T_{ab}$ with two pairs of double eigenvectors and presenting the symmetry requirements may be written as
\begin{gather}
T_{ab} =  \lambda_1 r^2 \gamma_{ab} + \lambda_2 r^2 h_{ab} \Rightarrow \nonumber\\
T_{ab} = \lambda_1 \left(-2e^f \partial_{(a} u \, \partial_{b)} v \right) + \lambda_2 \left(r^2\,h_{ab}\right).
\end{gather}
Defining a new basis
\begin{gather}
n^a = \frac{e^{-f/2}}{2}\left[ u^a + v^a\right]\,, \nonumber\\
e^a = \frac{e^{-f/2}}{2}\left[ u^a - v^a\right]\,,
\end{gather}
which satisfy
\begin{gather}
n^a n_a = -1 \,, \quad e^a e_a = 1 \, ,
\end{gather}
we have
\begin{gather}
T_{ab}n^a n^b = - \lambda_1 \,,\quad
T_{ab}e^a e^b = \lambda_1,
\end{gather}
which leads to
\begin{gather}
T_{ab} = -\lambda_1 \, n_a n_b + \lambda_1 \, e_a e_b + \lambda_2 \, r^2 h_{ab}\,. \label{eq:emtensor}
\end{gather}

We may interpret Eq.~\eqref{eq:emtensor} as the energy-momentum tensor of a fluid with energy density $-\lambda_1$ and anisotropic pressure, with value $\lambda_1$ in the direction orthogonal to the orbits $O_2$ and value $\lambda_2$ tangent to it. If we apply the weak energy condition, then $\lambda_1 < 0$, which means that the fluid must have  \emph{negative pressure} in the $e^a$ direction.

{An important feature of the energy-momentum tensor in 
\eqref{eq:emtensor} is that the flow velocity $n^a$ is not uniquely defined, as 
a boost transformation of the form
\begin{gather}
n'^a = \cosh \omega \, n^a + \sinh \omega \,e^a\,,\nonumber\\
e'^a = \cosh \omega \, e^a + \sinh \omega \, n^a \,,
\end{gather}
for arbitrary $\omega$ preserves its form. This means that there exists a 
one-parameter family of observers, with different velocities, that are 
"comoving" to the fluid. This is a vacuumlike property, and in 
Ref.\cite{Bronnikov:2016dhz} this kind of fluid is called Dminikova vacuum, or 
D-vacuum.}

A particularly simple realization of matter of this form corresponds to $\lambda_1 = \lambda_2 = - \Lambda$, where the energy-momentum tensor correspond to a cosmological constant (in this case, the Segr\'e type is [(1,111)]).

Another case of interest is the presence of a non-null electromagnetic field 
$F_{ab}$. In the absence of charges and radiation 
\cite{Kopczynski-Trautman-book} the energy-momentum tensor may be written as
\begin{gather}
T_{ab} = \frac{1}{2}\left(E^2 + B^2\right) [n_a n_b - e_a e_b + r^2 h_{ab} ]\,,
\end{gather}
which corresponds to a Segr\'e type $[(1,1)(11)]$, for $\lambda_2 = -\lambda_1 
=\frac{E^2 + B^2}{2}$.

We see that the most known cases where the Birkhoff theorem is usually applied 
in the literature are quite particular, as they correspond to $\lambda_1 = \pm 
\lambda_2$. In the next section, we will solve  Einstein's equations for a general matter model satisfying the above conditions.

In general, we may represent an energy-momentum tensor of the Segr\'e types required as
\begin{gather}
T^{ab} = \lambda_1 g^{ab} + \left(\lambda_2 - \lambda_1 \right) r^{-2} h^{ab}\,.
\end{gather}
The energy-momentum conservation is written
\begin{gather}
0 = \nabla_a T^{ab} 
= g^{ab}\partial_a \lambda_1 + r^{-2}h^{ab} \partial_a \left(\lambda_2 - \lambda_1 \right) + \nonumber\\ (\lambda_2 - \lambda_1) \nabla_a (r^{-2} h^{ab} )\,.\label{eq:conservation}
\end{gather}

If $\lambda_2 = \lambda_1$, this implies $\partial_a \lambda_1 = 0$, which means 
that the eigenvalue must be constant, as  is well known for the cosmological 
constant. Therefore, $\lambda_2 \neq \lambda_1$ is necessary in order to obtain 
models with varying $\lambda_1$. We will see in the next section that only 
$\lambda_1$ appears directly in Einstein's equations, but $\lambda_2$ affects 
implicitly the solution as it is related to $\lambda_1$ according to 
Eq.~\eqref{eq:conservation}.

{A thorough presentation of the field Lagrangians that 
produce this type of energy-momentum tensors can be found in Ref.
\cite{bronnikov1980generalisation}. {Another important remark concerns interpretation of the results in terms of matter models: }
This analysis is 
equally valid for extra terms in Einstein equations provided by modified gravity 
theories.}

\subsection{Summarizing our results}

We conclude this section by stating the generalized version of the Birkhoff theorem in our language.

\begin{theorem}
Let $g$ be a metric tensor of a spacetime that admits a codimension-two foliation of the form Eq.~\eqref{eq:generalmetric} where $h$ is two-dimensional Riemannian metric tensor, induced on the two-dimensional spacelike leaves of the foliation. 

If $h$ has constant curvature, $\Theta_u \neq 0$, as defined in Eq.~\eqref{eq:nullexpansion} and the energy-momentum tensor has the form given in Eq.~\eqref{eq:emtensor}, then $g$ has an additional isometry generated by a Killing vector $\chi$ orthogonal to the leaves of the foliation. 

In addition, if the spacetime region considered is \emph{regular}, then $\chi$ is timelike and the metric is static. If the spacetime region is \emph{trapped} or  \emph{antitrapped}, then $\chi$ is spacelike and the metric is homogeneous.
\end{theorem}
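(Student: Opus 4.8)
The plan is to show that this theorem is, in essence, a repackaging of Propositions~\ref{prop:equal-expansions}--\ref{prop:killing-character}, so that the genuine content lies in verifying that the prescribed matter model reduces the focusing equations to the homogeneous form already analysed in Sec.~\ref{sec:vacuum}. First I would establish that the energy-momentum tensor \eqref{eq:emtensor} forces $R_{uu} = R_{vv} = 0$. Writing Einstein's equation \eqref{eq:einstein} as $R_{ab} - \tfrac{1}{2}R\,g_{ab} = T_{ab}$ and contracting with $u^a u^b$ (resp.\ $v^a v^b$), I would use that in dual null coordinates $g_{uu} = g_{vv} = 0$, so that $G_{uu} = R_{uu} = T_{uu}$ and $G_{vv} = R_{vv} = T_{vv}$. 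Since \eqref{eq:emtensor} contains only the off-diagonal combination $\partial_{(a}u\,\partial_{b)}v$ together with the purely angular piece $r^2 h_{ab}$ --- neither of which has a $uu$ or $vv$ component --- both $T_{uu}$ and $T_{vv}$ vanish, exactly as the discussion of the Segr\'e types $[(1,1)(11)]$ and $[(111,1)]$ anticipates.

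With $R_{uu} = R_{vv} = 0$, and using that the constant curvature of $h$ leaves the Raychaudhuri equations \eqref{eq:Rayuu} and \eqref{eq:Rayvv} in their shear- and vorticity-free form, the focusing equations collapse to the homogeneous system \eqref{eq:thetau-equation} and \eqref{eq:thetav-equation} --- precisely the equations solved in the vacuum case. I would then invoke Proposition~\ref{prop:equal-expansions} to pass to the distinguished gauge $(U,V)$ in which $|\Theta_{(U)}| = |\Theta_{(V)}|$, noting that the hypothesis $\Theta_u \neq 0$ of Eq.~\eqref{eq:nullexpansion}, together with the transformation law \eqref{eq:gaugetransform} and $U'(u), V'(v) > 0$, guarantees $\Theta_{(U)} \neq 0$ throughout the domain. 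Applying Proposition~\ref{prop:killingvector} (whose items~ii and~iii require exactly this non-vanishing to conclude $\partial_\pm r = \partial_\pm f = 0$) then produces the additional Killing vector $\chi = \partial_\pm$ orthogonal to the leaves $O_2$.

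It remains to fix its causal character, which is where Proposition~\ref{prop:killing-character} enters. Since $|\Theta_{(U)}| = |\Theta_{(V)}| \neq 0$, either $\Theta_{(U)} = -\Theta_{(V)}$, giving $\Theta_{(U)}\Theta_{(V)} < 0$ (a regular/untrapped region), or $\Theta_{(U)} = \Theta_{(V)}$, giving $\Theta_{(U)}\Theta_{(V)} > 0$ (a trapped or antitrapped region). In the first case Proposition~\ref{prop:killing-character}(\ref{item:timekilling}) makes $\chi = \partial_+$ timelike, and combined with its hypersurface orthogonality (the identity $\ud \chi_a^\pm = \chi_a^\pm \wedge \ud f$) this renders the metric static; in the second case Proposition~\ref{prop:killing-character}(\ref{item:spacekilling}) makes $\chi = \partial_-$ spacelike, and together with the $G_3$ acting transitively on the constant-curvature orbits it generates a transitive action on the spacelike slices, i.e.\ spatial homogeneity. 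The main obstacle is the opening step: one must be confident that it is the off-diagonal algebraic structure of \eqref{eq:emtensor}, rather than some delicate differential identity, that annihilates the sources $R_{uu}, R_{vv}$, and that the constant-curvature assumption on $h$ is exactly what prevents shear or vorticity from re-entering the focusing equations and spoiling the reduction to \eqref{eq:thetau-equation} and \eqref{eq:thetav-equation}. Everything downstream is then a direct application of the three propositions.
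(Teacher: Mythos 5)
Your proposal is correct and follows essentially the same route as the paper: the paper presents this theorem as a summary of its preceding results, with the proof being exactly the assembly you describe --- the matter model \eqref{eq:emtensor} (equivalently, Segr\'e type $[(1,1)(11)]$ or $[(111,1)]$) forces $R_{uu}=R_{vv}=0$, the constant curvature of $h$ keeps the focusing equations shear- and vorticity-free, so Eqs.~\eqref{eq:thetau-equation} and \eqref{eq:thetav-equation} hold and Propositions~\ref{prop:equal-expansions}, \ref{prop:killingvector} and \ref{prop:killing-character} apply verbatim. Your direct contraction argument showing $T_{uu}=T_{vv}=0$ is just a slightly more explicit verification of the implication the paper obtains through its Segr\'e-type analysis, so there is no substantive difference.
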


\section{Solving the equations}

In this section we aim to determine the solutions that satisfy the theorem, by 
using the tools we have already prepared. We have to consider two types of 
solutions: those for regular or untrapped regions, where $\Theta_U + \Theta_V~ 
= 0$ and those for trapped regions, corresponding to $\Theta_U - \Theta_V = 0$.

\subsection{Regular regions}

With no loss of generality,  we assume $\mO = \Theta_U > 0$ and $\Theta_V 
= - \mO < 0$.

It is useful to remark that, in this case $\partial_+ r = 0$, which means that the vectors $\partial_+$ and $\partial_r$ are orthogonal, which implies that $\partial_-$ is proportional to $\partial_r$. Indeed, it is straightforward to verify that
\begin{gather}
\mO= \frac{1}{r}\partial_- r \Rightarrow \partial_- = r \mO \partial_r\,, \label{eq:delminus}
\end{gather}
wherever $\partial_- r \neq 0$.

We must revisit Eq.~\eqref{eq:UVequations} and note that by redefining $f$ using the transformation in Eq.~\eqref{eq:Fredefined}, all the functions on the right-hand side are in the exponential term. Since the coordinates $U$ and $V$ are unique up to a rescaling transformation we can set the proportionality constant as $ 2$, and then
\begin{gather}
r\mO =  2e^f\,, \label{eq:expf}
\end{gather}
which allows us to write the line element as
\begin{gather}
\ud s^2 = - \frac{r\mO}{2}\left(\ud U \, \ud V + \ud V \, \ud U \right) + r^2 h_{\alpha \beta}\ud x^\alpha \, \ud x^\beta\,,\label{eq:metricUV}
\end{gather}

Also, according to Proposition~\ref{prop:killingvector}, no metric component depends on $\chi^+$, and its basis vector is orthogonal to $\partial_r$, which makes the pair of coordinates $(\chi^+, r)$ a natural choice to describe the solution. Using Eq.~\eqref{eq:expansion-definition} and the definition of $\chi^+$, we obtain

\begin{gather}
\ud r = \frac{r\mO}{2} \left( \ud U - \ud V \right)\,,\nonumber\\
\ud \chi^+ = \frac{1}{2}\left(\ud U + \ud V\right)\,,\label{eq:coordstransformation}
\end{gather}
which leads to
\begin{gather}
\ud s^2 = -{r\mO} \, \ud \chi^{+2}+\frac{\ud r^2}{r \mO}+ r^2 \, h_{\alpha \beta}\,\ud x^\alpha \, \ud x^\beta\,.\label{eq:metricsolution}
\end{gather}

Now, we have only to solve the equations for $\mO(r)$. Considering a matter model of the form Eq.~\eqref{eq:emtensor}, applying the Einstein equations Eq.~\eqref{eq:einstein} to Eq.~\eqref{eq:Guv} and considering Eq.~\eqref{eq:expf}, we have
\begin{gather}
\frac{1}{2} \left(\partial_V \Theta_U + \partial_U \Theta_V \right) + \Theta_U \Theta_V + \epsilon \frac{e^f}{r^2} = -\lambda_1 e^f \Rightarrow 
\nonumber\\
\frac{1}{2} \left(\partial_V - \partial_U \right) \mO - \mO^2 + \frac{\mO}{2}\left( \frac{\epsilon}{r} + \lambda_1 r \right)=0 \Rightarrow \nonumber\\
-\frac{1}{2} \partial_- \mO - \mO^2 + \frac{\mO}{2} \left(\frac{\epsilon}{r} + \lambda_1 r\right)=0\,.\label{eq:delminusO}
\end{gather}

 Using Eq.~\eqref{eq:delminus}, we are able to write a differential equation with respect to $r$:
 \begin{gather}
 r \partial_r \mO + 2\mO =  \left(\frac{\epsilon}{r} + \lambda_1 (r) r \right) \Rightarrow \nonumber\\
  \partial_r \left( r^2 \mO \right) = \left({\epsilon} + \lambda_1 (r) r^2 \right) \Rightarrow \nonumber\\
  r \mO = \epsilon + \frac{b}{r} + \frac{1}{r}\int^r \lambda_1(s)s^2 \ud s .\label{eq:rO-solution}
 \end{gather}
 
 In order to gain a better insight on our family of solutions, we consider the 
case where  $\lambda_1(r)$ admits a representation as a sum or a series of 
powers of $r$, provided it is uniformly convergent on $\mathcal{D}$:
 \begin{gather}
 \lambda_1(r) =- \sum_i c_i r^i\,,
 \end{gather}

Then, we can integrate it term by term and find
\begin{gather}
r\mO = \epsilon + \frac{b}{r} - \frac{c_{-3}\ln r}{r} - \sum_{i \neq -3} \frac{c_ir^{i+2}}{i+3}\,. \label{eq:generalsolution}
\end{gather}

The most common sources studied in black hole physics are particular cases of our model. The cosmological constant is equivalent to $c_0 =  \Lambda$, the electrostatic central field to $c_{-4} =  q^2$. We also see that in this case the solutions behave  "linearly": The addition of sources, provided they satisfy the requirements of the Birkhoff theorem, corresponds to the addition of a respective term in the solution.{It is worth noticing that in the Newtonian limit, $r\mO \sim 1 + 2\Phi$, where $\Phi$ is the  potential and Eq.~\eqref{eq:rO-solution} is the relativistic analog of the Poisson equation. As in the Newtonian case, the solution~\eqref{eq:generalsolution} must satisfy boundary conditions at the innermost and outermost radius of the domain $\mathcal{D}$, including $r\to 0$ and $r \to \infty$ as possible cases. However, as explicitly shown in Ref.\cite{Kim:2016osp}, the boundary conditions at the outermost radius may affect physics in $\mathcal{D}$, in opposition to the result of Newtonian gravity.}

Another fact of interest is that all those solutions are of Petrov type D, 
meaning that the geometry corresponds only to the Coulombian part of the 
gravitational field. This is expected, since the high degree of symmetry of those 
solutions eliminates any form of gravitational radiation term.

For the spherical solutions, we are able to compute the Misner-Sharp mass \cite{Hayward:1994bu} of the general solution as
\begin{gather}
M = \frac{r}{2} \left(1 - g^{ab}\partial_a r \, \partial_b r\right)\,.
\end{gather}
From the line element in Eq.~\eqref{eq:metricsolution}, we have $ g^{ab}\partial_a r \, \partial_b r = r\mO$. Using the general solution Eq.~\eqref{eq:generalsolution}, we have
\begin{gather}
M= \frac{1}{2}\left( -b +{c_{-3}\ln r} + \sum_{i \neq -3} \frac{c_ir^{i+3}}{i+3} \right)\,,
\end{gather}
which allows us to identify that the integration constant $b = -2m$, where $m$ is the central mass, as it is the component of the total Misner-Sharp mass which is independent of the radius and corresponds to the Schwarzschild mass in the absence of sources. The other terms give the energy contribution of each kind of source.

\subsection{Trapped 
and antitrapped regions}

Those regions correspond to $\Theta_U = \Theta_V = \mO$. Trapped regions present $\mO < 0$ and antitrapped regions have $\mO~>~0$. We follow the changes in the equations we presented for regular regions.
 In this case $\partial_- r = 0$, and then we replace
 Eq.~\eqref{eq:delminus} by
 \begin{gather}
 \frac{1}{r} \partial_+ r = \mO\, \Rightarrow \partial_+ = r\mO \partial_r,\label{eq:partialtrapped}
\end{gather}

Equation~\eqref{eq:expf} becomes
\begin{gather}
r\mO = \pm 2 e^f\,,
\end{gather}
because we need to include the case where $\mO < 0$.
The line element in Eq.~\eqref{eq:metricUV} becomes
\begin{gather}
\ud s^2 = - \frac{r|\mO|}{2}\left(\ud U \, \ud V + \ud V \, \ud U \right) + r^2 h_{\alpha \beta}\ud y^\alpha \, \ud 
y^\beta\,.\label{eq:metricUVtrapped}
\end{gather}

We replace Eqs;~\eqref{eq:coordstransformation} by
\begin{gather}
\ud r = \frac{r\mO}{2} \left( \ud U + \ud V \right)\,,\nonumber\\
\ud \chi^+ = \frac{1}{2}\left(\ud U - \ud V\right)\,,
\end{gather}

The line element in the coordinates $(\chi^- , r )$ is given by
\begin{gather}
\ud s^2 = {r|\mO|} \, \ud \chi^{-2}-\frac{\ud r^2}{r |\mO|}+ r^2 \, h_{\alpha \beta}\,\ud y^\alpha \, \ud y^\beta\,.\label{eq:metricsolutiontrapped}
\end{gather}

We notice that in trapped regions, the coordinate $r$ is timelike and the corresponding metric element is negative, as expected from Proposition~\ref{prop:killing-character} along with the fact that $\partial_- r = 0$.

Equation~\eqref{eq:delminusO} becomes
\begin{gather}
\frac{1}{2} \partial_+ \mO + \mO^2 \pm \frac{\mO}{2} \left(\frac{\epsilon}{r} + \lambda_1 r\right)=0\,, \label{eq:delminusOtrapped}
\end{gather}
where the $+$ sign correspond to antitrapped regions and the $-$ sign to trapped regions. Changing the $\chi^+$ coordinate to $r$, using Eq.~\eqref{eq:partialtrapped}, we obtain
\begin{gather}
\frac{1}{2} r \partial_r \mO + \mO \pm \left(\frac{\epsilon}{r} + \lambda_1 r\right)=0\,, \Rightarrow \nonumber\\
r\mO = \mp \left(\epsilon + \frac{b}{r} + \frac{1}{r}\int^r \lambda_1(s)s^2 \ud s   \right)\,.
\end{gather}

Therefore, the only difference in the case of trapped and antitrapped regions lies in the character of the Killing vector and of the $r$ coordinate. The absolute value of the metric components coincide.

Notice that the metric solutions we found fail to cover the marginal surfaces that correspond to $\mO = 0$. The three-dimensional locus defined by the marginal surfaces is an apparent (or trapping) horizon \cite{Hayward:1993mw}, which is the boundary between trapped and untrapped regions. While our choice of coordinates $(\chi^\pm , r)$ makes use of the symmetry of the problem in order to simplify its resolution, 
the metrics in Eqs.~\eqref{eq:metricsolution} and \eqref{eq:metricsolutiontrapped}
have the Schwarzschild form in usual coordinates, and the marginal surfaces correspond to coordinate singularities. A coordinate system that covers both sides of those marginal surfaces is easily built by  known methods as, for instance, the definition of an Eddington-Finkelstein-like system of coordinates. This analysis leads to the known fact that the Killing field is null on a marginal surface.

\section{Conclusion}\label{Conclusion}

We have shown how to obtain the Birkhoff theorem from the dual null formalism, naturally relating the result with the type of region considered, if regular or trapped. Only in a regular region does the theorem lead to static solutions.

The formalism has also enabled us to prove a very general version of the Birkhoff 
theorem, coming short of being completely general in that we did not consider 
symmetries with timelike orbits as done by Bona \cite{Bona(1988)}. However, we 
have {obtained} general matter sources for which the theorem is valid and 
thus, with the aid of dual null formalism, we found all the solutions for 
sources that can be expressed as a power series on $r$.

The Birkhoff theorem is much invoked in the literature in relation to the idea that 
given a spherically symmetric distribution of matter, the gravitational physics 
at some given value of the  radial coordinate depends only on the overall mass 
of the distribution inside that radius. This is, of course, not true in general, 
and a clear counterexample is provided by the well-known 
Lema\^{\i}tre-Tolman-Bondi dust solution \cite{Plebanski:2006sd}, in which the 
gravitational physics, at some spherical shell, depends not only on the 
integrated Misner-Sharp mass but also on an energy parameter that weights the 
spatial curvatures and the initial energy conditions. Other misuses have been 
discussed in Ref. \cite{Kim:2016osp}.  We thus believe that the present work is 
transparent and useful in making it absolutely clear what is the scope of  
applicability of the Birkhoff theorem in general relativity, and also as guide for 
the investigation of analogous results in modified gravities theories.

\begin{acknowledgments}
{The authors wish to thank K.A. Bronnikov for drawing our 
attention to useful references in literature.} 
A. M. thanks Conselho Nacional de Desenvolvimento Cient\'ifico e Tecnol\'ogico (CNPq), Brazil, Grant n\textordmasculine~400342/2017-0. JPM  acknowledges the financial  support by Funda\c{c}\~ao para a Ci\^encia e a Tecnologia (FCT) through the research grant UID/FIS/04434/2013. MLeD acknowledges the financial  support by Lanzhou University starting fund and wish to thank the hospitality of Instituto de Astrof\'{\i}sica e Ciencias do Espa\c co (IA), at the FCUL in Lisbon, where a part of this work was carried out. {MLeD and JPM are most grateful to Tera Shimizu for the tasty discussions and musical contribution to the progress of our work.}
\end{acknowledgments}

 \bibliography{referencias, shortnames}

\end{document}